\newcommand{\bbW}{\mathbb{W}}
\newcommand{\bbWh}{\widehat{\bbW}}
\newcommand{\capsize}{\footnotesize}
\newcommand{\hsom}{\hspace{1em}}
\newcommand{\smfm}[1]{\mbox{\footnotesize $#1$}}
\newcommand{\ssN}{{\scriptscriptstyle N}}
\newcommand{\ssM}{{\scriptscriptstyle M}}
\newcommand{\cM}{{\cal M}} 
\newcommand{\cN}{{\cal N}}
\newcommand{\tsty}{\textstyle} 
\newcommand{\sfT}{{\sf \scriptscriptstyle T}}
\newcommand{\sfS}{{\sf \scriptscriptstyle S}}
\newtheorem{problem}{Problem}
\title{\LARGE \bf {An Interpolation-based Scheme for \\ Rapid Frequency-Domain System Identification}}
\author{Jared Jonas and Bassam Bamieh$^1$%
\thanks{$^1$Jared Jonas and Bassam Bamieh are associated with the Department of Mechanical Engineering, University of California - Santa Barbara {\tt\small \{jjonas,bamieh\}@ucsb.edu}}}
\begin{document}
\maketitle
\thispagestyle{empty}
\pagestyle{empty}

\begin{abstract}
We present a frequency-domain system identification scheme based on barycentric interpolation and weight optimization. The scheme is related to the Adaptive Antoulas-Anderson (AAA) algorithm for model reduction, but uses an adaptive algorithm for selection of frequency points for interrogating the system response, as would be required in identification versus model reduction. The scheme is particularly suited for systems in which any one sinusoidal response run is long or expensive, and thus there is an incentive to reduce the total number of such runs. Two key features of our algorithm are the use of transient data in sinusoidal runs to both optimize the barycentric weights, and automated next-frequency selection on an adaptive grid. Both are done with error criteria that are proxies for a system's $\sf H^2$ and $\sf H^\infty$ norms respectively. Furthermore, the optimization problem we formulate is convex, and can optionally guarantee stability of the identified system.  Computational results on a high-order, lightly damped structural system highlights the efficacy of this scheme.  
\end{abstract}

\section{Introduction}

Typical frequency domain system identification algorithms proceed by injecting sinusoidal signals, recording the steady-state gain and phase response, and repeating such experiments over a large set of frequencies. One then obtains samples (in frequency) of the frequency response of the system. Several methods can then be used to ``fit'' a transfer function to this frequency-domain data, which include fitting methods such as  vector fitting~\cite{gustavsen2002rational}, least-squares-based fitting schemes such as~\cite{sanathanan2003transfer}, or the various subspace identification based methods~\cite{jamaludin2013n4sid}. There are also interpolation-type techniques such 
as Pade approximation, moment matching~\cite{astolfi2010model}, and the Loewner framework~\cite{ionita2014data,antoulas2017tutorial,karachalios2021loewner}, which are more closely related to the present paper. 

In this paper we present a new frequency-domain identification scheme that appears to be well-suited for high-order lightly-damped systems such as those that arise in acoustics, thermoacoustics, and light structures. Such systems have features that make frequency domain identification challenging in certain settings. First, the presence of lightly damped modes implies long transient times for each sinusoidal experiment. More importantly, for certain analysis and design problems, highly accurate gain and phase data is required around high-\(Q\) resonant modes~\cite{Epperlein15}, thus requiring a rather fine frequency grid around such resonances. 


The identification scheme presented here is an interpolation-based scheme inspired by the Adaptive Antoulas-Anderson (AAA) algorithm and its variants~\cite{Nakatsukasa_2018,Gosea24Stable,benner2021interpolation,Jonas2024}. Our scheme is motivated by requirements for frequency domain identification experiments where each sinusoidal run is either time consuming or difficult to do, and therefore there is an incentive to reduce their number. One example is in identification of thermoacoustic dynamics such as those in the Rijke tube~\cite{Epperlein15}. The scheme contains the following features. (i) It is based on barycentric interpolation, which gives exact interpolation at given frequency points, yet allows for weight selection to optimize the response at all other frequencies. (ii) We use the transient data from each sinusoidal run to formulate an optimization problem for the weights, exploiting the natural idea that those transients contain information about the overall frequency response. (iii) We interpolate at the frequency from an adaptive grid which has the highest approximation error. 


This paper is organized as follows. Some background on state-space based barycentric interpolation and the overall problem formulation is given in the next section. Section~\ref{main.sec} presents the algorithmic details of weight optimization, and the main stability result. Numerical results from a high-order lightly damped structural system are presented in Section~\ref{computation.sec}, and we end with some concluding remarks. 

\section{Background and Problem Formulation}

In this paper we consider Single-Input-Single-Output system, although the ideas are readily generalizable to the Multi-Input-Multi-Output case. Consider the following rational function
\begin{equation}
\begin{aligned} 
	R(s) &= M\inv(s)  \, N(s), \\
    M(s) &= I + \sum_{k=0}^\ell \tfrac{w_k}{s-j\omega_k},  ~
    N(s) = D + \sum_{k=0}^\ell \tfrac{w_k \, G(j\omega_k)}{s-j\omega_k}, 			
\end{aligned} \label{eq:M_N_def}
\end{equation}
where for each $k$,  \(\omega_k\) represents a distinct frequency, and  \(w_k\) is a non-zero scalar weight term.  The rational function \(R\) is known as a \textit{barycentric interpolant}.  It has the following interpolation properties~\cite{Jonas2025TAC}:
\begin{enumerate}
    \item For \(w_k \neq 0\), \(R(j\omega_k) = G(j\omega_k)\).
    \item \(\lim_{\omega\to\infty} R(j\omega) = D\).
\end{enumerate}

Barycentric interpolants like~\eqref{eq:M_N_def} are useful in problems where exact interpolation at certain frequencies is required, but the additional freedom in selecting the weights $\{ w_k \}$ allows for optimizing other error criteria. This framework was used for example in~\cite{Jonas2025TAC} for model reduction where  weights are selected to minimize a proxy measure for an $\sf H^2$ system error norm.  

Special realizations of $M,N$ above which combine complex-conjugate poles and interpolation points are used to ensure the resulting system has real coefficients.  Let \(D\), \(G(0)\), and $\left\{\big( \omega_k , G(j\omega_k) \big) \right\}_{k=1}^\ell$ be given interpolation data, and let \(M\) and \(N\) be the following systems
\begin{equation} 
        \begin{aligned}
            M(s) &=\tsty I + w_0\,  \mathcal{M}_0(s) + \sum_{k=1}^\ell \, w_k \,  \mathcal{M}_k(s), 
            																		\\
            N(s) &= \tsty D + w_0\,  \mathcal{N}_0(s) + \sum_{k=1}^\ell \, w_k\,  \mathcal{N}_k(s), 
        \end{aligned}								
  \label{eq:M_N_def2} 
\end{equation}
where \(w_0\), \(\cM_0\), and \(\cN_0\) represent the interpolation point at $\omega=0$.  Realizations with real parameters for each \(\mathcal{M}_k\) and \(\mathcal{N}_k\) are 
\[
	\mathcal{M}_k = \smfm{ \brac{\begin{array}{c|c}\mathcal{A} & \mathcal{B}_{\ssM, k} \\ \hline I & 0\end{array}}}, 
		\hsom  \smfm{ \mathcal{N}_k = \brac{\begin{array}{c|c}\mathcal{A} & \mathcal{B}_{\ssN, k} \\ \hline I & 0\end{array}}},
\]
with the matrices
\begin{align*}
    \mathcal{A}_0 &= 0, \;   & \mathcal{B}_{\ssM, 0} &= 1,  \;  &  \mathcal{B}_{\ssN, 0} &= G(0),				\\
    	\mathcal{A}_k &=  \smfm{\begin{bmatrix}0 & \omega_k \\ -\omega_k & 0\end{bmatrix}}, 
  	 \; & \mathcal{B}_{\ssM, k} &=  \smfm{ \begin{bmatrix}1 \\ 0\end{bmatrix}}, 
  	  \; & \mathcal{B}_{\ssN, k} &=  \smfm{\begin{bmatrix}\Re(G(j\omega_k)) \\ -\Im(G(j\omega_k))\end{bmatrix}}.
\end{align*}

The weights in~\eqref{eq:M_N_def2} can be factored out, and $M$ and $N$ can be represented in terms of weight-independent systems \(\cN\), \(\cM\), and a matrix \(\bbW\) made up of the weights.  Define these two SIMO systems \(\mathcal{M}\) and \(\mathcal{N}\)
\begin{align*}
    \mathcal{M} &= \begin{bsmallmatrix}I \\ \mathcal{M}_0 \\ : \\ \mathcal{M}_\ell\end{bsmallmatrix} 
    	= \smfm{ \brac{\begin{array}{c|c}\mathcal{A} & \mathcal{B}_\ssM \\ \hline 0 & 1 \\ I & 0\end{array}} },
    \; \mathcal{N} = \begin{bsmallmatrix}D \\ \mathcal{N}_0 \\ : \\ \mathcal{N}_\ell\end{bsmallmatrix} 
    	= \smfm{\brac{\begin{array}{c|c}\mathcal{A} & \mathcal{B}_\ssN \\ \hline 0 & D \\ I & 0\end{array}}}, 
\end{align*}
where 
\begin{align*}
    \mathcal{A} = \mathsf{blkdiag} \p{{\mathcal{A}_0, \ldots, \mathcal{A}_\ell}}, ~
    \mathcal{B}_\ssM = \begin{bsmallmatrix}\mathcal{B}_{M,0} \\ : \\ \mathcal{B}_{\ssM,\ell}\end{bsmallmatrix}, ~
    	\mathcal{B}_\ssN = \begin{bsmallmatrix}\mathcal{B}_{N,0} \\ : \\ \mathcal{B}_{N,\ell}\end{bsmallmatrix}.
\end{align*}
Then we see $R$ defined in~\eqref{eq:M_N_def} can be rewritten in terms of these systems \(\cM\), \(\cN\), and a matrix \(\bbW\) containing the weights as 
\[
	R = (\bbW \mathcal{M})\inv (\bbW \mathcal{N}), \hsom 
		\bbW := \left[ \arraycolsep=2pt \begin{array}{c:ccc} 1 & w_0 & \cdots & w_\ell \end{array} \right] 
		=: \big[ \arraycolsep=2pt \begin{array}{c:c} 1 & \bbWh \end{array} \big] .
\]
A state space representation for $R$ is given by 
\begin{equation}
    R  = \smfm{
    	\brac{\begin{array}{c|c}\mathcal{A} - \mathcal{B}_\ssM \bbWh & \mathcal{B}_M D - \mathcal{B}_N \\ \hline -\bbWh & D \end{array}}}, \label{eq:R}
\end{equation} 
Details of these derivation can be found in~\cite{Jonas2025TAC}.  

In~\cite{Jonas2025TAC} we used the above setting for interpolation-based model reduction of a high-order system $G$. In this paper $G$ is unknown, and the interpolation data $\left\{\big( \omega_k , G(j\omega_k) \big) \right\}_{k=0}^\ell$ will come from frequency-domain system identification experiments. 

\subsection{Problem formulation}

Given an unknown Linear Time Invariant (LTI) system and a frequency point $\omega_k$, a typical frequency-domain identification experiment uses a pure sinusoid\footnote{There are of course other choices such as chirp or white noise inputs. However, we are interested in setting where very accurate measurements of the frequency response at special frequencies is required. This normally requires a pure sinusoid input.} of frequency $\omega_k$ as input to this system.  The response $y^{(k)}$ is recorded. The response can be divided into two segments $(y^{(k)}_\sfT, y^{(k)}_\sfS)$, where $y_\sfT$ and $y_\sfS$ denote the transient and the steady state portion of the response. Detection of the point of time that separates the two portions of the response can be done in many ways, one of which is described in Appendix~\ref{transient_steady.app}.  An outline for our algorithm is:
\begin{itemize} 
	\item After collecting the data from run \(k\), the steady state part $y^{(k)}_\sfS$ is used to determine the complex value $G(j\omega_k)$. This is then used as one interpolation point in~\eqref{eq:M_N_def}. 
	\item There is information in the transient portion $y^{(k)}_\sfT$ of the response. This can be used in an optimization problem to select the free weights $\{ w_k \}$ in~\eqref{eq:M_N_def}. The optimization problems we formulate is stated in Theorems~\ref{thm:sol} and~\ref{thm:stable}. The objective is a proxy for the error between the system model acting on all the previous input data, compared to the actual measured output data. The convex optimization problem in theorem~\ref{thm:stable} is formulated to guarantee stability of the identified model. 
	\item We form an adaptive grid of interpolation points by running experiments at frequencies between the ones we are already interpolating, then interpolating at the one with highest approximation error.
\end{itemize} 

The next section describes the algorithm and the various choices involved in more detail. 

\section{Algorithms and Main Results} \label{main.sec}

\subsection{System ID experiments} \label{sec:sys_id}

Consider the unknown system \(G\).  In order to construct the interpolant~\eqref{eq:M_N_def}, we need estimations for the feedthrough term \(D\), the DC gain \(G(0)\), and the frequency response for each frequency we're interpolating.  Consider the \(k\)th system identification experiment and the desired frequency \(\omega_k\).  We follow the procedure outlined in the appendix section~\ref{transient_steady.app} to collect the transient data \((u^{(k)}, \, y^{(k)})\) and estimated frequency response \(G(j\omega_k)\).  

\subsection{Weight optimization}

From the system ID experiments, we have a collection of transient data we utilize for setting the weighting free parameters of the system.  We want to choose the weights to replicate the outputs of the system \(R\) subject to the same \(m\) sinusoidal inputs.  In other words, we want to minimize the sum
\[\sum_{k=1}^m \frac{1}{n_k} \norm{\hat{y}^{(k)} - y^{(k)}}_2^2,\]
where \(\hat{y}^{(k)}\in\real^{n_k}\) is the output of \(R\) subject to the input \(u_k\), and \(y^{(k)}\in\real^{n_k}\) is the recorded output of the unknown system \(G\).  Essentially we are summing the mean squared error of each of the \(m\) experiments.  However, a minimization of this sum over the set of weights is non-convex in \(\bbW\), and typically this is handled by instead minimizing a related sum which has been weighted by \(M\).  Thus we consider the minimization of the quantity
\[\sum_{k=1}^m \frac{1}{n_k} \norm{M \p{\hat{y}^{(k)} - y^{(k)}}}^2,\]
which we rewrite as
\[\sum_{k=1}^m \frac{1}{n_k} \norm{Nu^{(k)} - My^{(k)}}^2\]
owing to the fact that \(M\hat{y}^{(k)} = MRu^{(k)} = Nu^{(k)}\).  Factoring out the weights yields the optimization problem shown in problem~\ref{prob:opt}, which has an explicit solution detailed in theorem~\ref{thm:sol}.

\begin{problem} \label{prob:opt}
\[\min_{\bbWh} \sum_{k=1}^m \frac{1}{n_k} \norm{\begin{bsmallmatrix}1 & \widehat{\bbW}\end{bsmallmatrix}\p{\mathcal{N}u^{(k)} - \mathcal{M}y^{(k)}}}^2,\]
with \(\mathcal{M}\) and \(\mathcal{N}\) defined in the background section.
\end{problem}

\begin{theorem} \label{thm:sol}
Consider the set of \(m\) input-output pairs \(\curly{(u^{(k)}, \, y^{(k)})}_{k=1}^m\) where \(u^{(k)}, \, y^{(k)}\in\real^{n_k}\) and form the empirical covariance matrix
\[X\in\real^{(2\ell+2)\times(2\ell+2)} := \sum_{k=1}^m X_k,\]
where \(X_k\) is the covariance matrix of the signal \(x^{(k)} := \mathcal{N} u^{(k)} - \mathcal{M}y^{(k)}\), i.e. \(X_k := \mathrm{cov}\p{x^{(k)}, \, x^{(k)}}\) where \(\mathrm{cov}\p{x, \, z} := \frac{1}{n} \sum_{i=1}^n x_i z_i^*\) and \(x\), \(z\) have \(n\) samples.  Then partition \(X\) into
\[X =: \begin{bsmallmatrix}\hat{X}_1 & \hat{X}_0 \\ \hat{X}_0^* & \hat{X}_2\end{bsmallmatrix}, \; \hat{X}_1\in \real.\]
If \(\hat{X}_2\) is positive definite, then the objective function in the optimization problem~\ref{prob:opt} has the minimizer
\[\bbWh = -\hat{X}_0\hat{X}_2\inv.\]
\end{theorem}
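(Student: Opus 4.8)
The plan is to recognize the objective as a single Hermitian quadratic form in the augmented row vector $\begin{bsmallmatrix}1 & \bbWh\end{bsmallmatrix}$, weighted by the aggregate matrix $X$, and then to minimize it by completing the square. Positive definiteness of the lower-right block $\hat{X}_2$ will play a dual role: it permits the inversion, and it certifies that the resulting stationary point is the unique global minimum.

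First I would collapse the sum of normalized mean-squared errors into one quadratic form. Viewing $x^{(k)} = \cN u^{(k)} - \cM y^{(k)}$ as the $(2\ell+2)\times n_k$ matrix whose columns are the successive time samples, the $k$th summand becomes $\frac{1}{n_k}\norm{\begin{bsmallmatrix}1 & \bbWh\end{bsmallmatrix}x^{(k)}}^2 = \begin{bsmallmatrix}1 & \bbWh\end{bsmallmatrix}\p{\tfrac{1}{n_k}x^{(k)}(x^{(k)})^*}\begin{bsmallmatrix}1 & \bbWh\end{bsmallmatrix}^*$, and the bracketed matrix is exactly the empirical covariance $X_k$. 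Summing over $k$ then expresses the objective as $\begin{bsmallmatrix}1 & \bbWh\end{bsmallmatrix} X \begin{bsmallmatrix}1 & \bbWh\end{bsmallmatrix}^*$ with $X = \sum_k X_k$. Being a sum of covariance matrices, $X$ is Hermitian and positive semidefinite.

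Next I would substitute the stated partition of $X$ to expand the objective as $\hat{X}_1 + \bbWh\hat{X}_0^* + \hat{X}_0\bbWh^* + \bbWh\hat{X}_2\bbWh^*$. With $\hat{X}_2 \succ 0$ I would complete the square, rewriting the objective as $\p{\hat{X}_1 - \hat{X}_0\hat{X}_2\inv\hat{X}_0^*} + \p{\bbWh + \hat{X}_0\hat{X}_2\inv}\hat{X}_2\p{\bbWh + \hat{X}_0\hat{X}_2\inv}^*$; here the constant is the Schur complement of $\hat{X}_2$ in $X$, and the remaining term is a positive definite quadratic that is nonnegative and vanishes precisely when $\bbWh = -\hat{X}_0\hat{X}_2\inv$. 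This is therefore the unique global minimizer. The same conclusion follows from the Wirtinger stationarity condition $\hat{X}_0 + \bbWh\hat{X}_2 = 0$, with $\hat{X}_2\succ0$ guaranteeing the critical point is a minimizer rather than a saddle.

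The computations are routine; the only delicate points are the Hermitian bookkeeping — keeping conjugate transposes consistent throughout the completion of the square for a complex (not merely real-symmetric) form — and verifying the first reduction, where the per-run factor $1/n_k$ is exactly what converts each squared-norm term into the covariance $X_k$ so that the $m$ experiments assemble into the single matrix $X$. I expect the main conceptual step to be this reduction to a covariance-weighted form; after that, requiring $\hat{X}_2\succ0$ (rather than mere semidefiniteness) is what simultaneously enables the inversion and rules out a flat direction, yielding uniqueness.
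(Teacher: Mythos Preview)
Your proposal is correct and follows essentially the same route as the paper: both reduce the objective to the single quadratic form $\begin{bsmallmatrix}1 & \bbWh\end{bsmallmatrix} X \begin{bsmallmatrix}1 & \bbWh\end{bsmallmatrix}^*$, partition $X$, and then minimize over $\bbWh$. The only difference is that the paper cites a prior result for the last step, whereas you carry out the completion of the square explicitly, making your argument slightly more self-contained.
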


\begin{proof}
First, rewrite the squared norm in problem~\ref{prob:opt} as
\[\begin{bsmallmatrix}1 & \widehat{\bbW}\end{bsmallmatrix} X_k \begin{bsmallmatrix}1 \\ \widehat{\bbW}^*\end{bsmallmatrix}.\]
Factor out the weight vectors from the sum to get
\[\min_\bbW \begin{bsmallmatrix}1 & \widehat{\bbW}\end{bsmallmatrix} X \begin{bsmallmatrix}1 \\ \widehat{\bbW}^*\end{bsmallmatrix}.\]
Now, partition \(X\) such that it conformably multiplies the weight vector, yielding the minimization
\[\min_{\widehat{\bbW}} \; \begin{bsmallmatrix}1 & \widehat{\bbW}\end{bsmallmatrix} \begin{bsmallmatrix}\hat{X}_1 & \hat{X}_0 \\ \hat{X}_0^* & \hat{X}_2\end{bsmallmatrix} \begin{bsmallmatrix}1 \\ \widehat{\bbW}^*\end{bsmallmatrix}.\]
From our previous work~\cite{Jonas2025TAC}, we know if \(\hat{X}_2\) is positive definite and if \(\hat{X}_0\) is in the column space of \(\hat{X}_2\), then the minimization has the solution
\[\widehat{\bbW} = - X_0 X_2\inv.\]
If \(\hat{X}_2\) is positive definite, then the column vector \(\hat{X}_0\) is always in the column space of \(\hat{X}_2\), thus the positive definiteness of \(\hat{X}_2\) is the only requirement.  
\end{proof}

\subsection{Enforcing stability}
Unfortunately, the weights generated from the previous optimization problem often don't yield a stable system \(R\).  In many cases we want \(R\) to be stable, thus we will develop a modified optimization problem which enforces the stability of the resulting system.  In other words, we want to solve 
\begin{align}
    &\min_{\bbWh} \sum_{k=1}^m \frac{1}{n_k} \norm{\begin{bsmallmatrix}1 & \widehat{\bbW}\end{bsmallmatrix}\p{\mathcal{N}u^{(k)} - \mathcal{M}y^{(k)}}}^2, \label{eq:min_stable} \\ 
    &R \; \mathrm{stable}, \nonumber
\end{align}
ideally with a convex solver.  In theorem~\ref{thm:stable}, we derive a convex relaxation problem~\ref{prob:lmi} from problem~\ref{prob:opt} which guarantees stability of the resulting system.

\begin{problem} \label{prob:lmi}
\begin{align*}
        &\min_{P, \, Q, \, \gamma} \gamma,  \\ 
        & P > 0, \; \begin{bmatrix}\gamma & Q \\ Q^* & 2P - \hat{X}_2\end{bmatrix} > 0, \\
        & YP- \mathcal{B}_MQ + PY^* - Q^* \mathcal{B}_M^* < -2\alpha P, 
\end{align*}
where
\[Y = \mathcal{A} - \mathcal{B}_M \hat{X}_0 \hat{X}_2\inv, \; \alpha > 0,\]
and with \(\hat{X}_0\), \(\hat{X}_2\) as defined in theorem~\ref{thm:sol} and \(\mathcal{A}\), \(\mathcal{B}_M\) as defined in the background section.
\end{problem}

\begin{theorem} \label{thm:stable}
    Let \(Q\), \(P\), and \(\gamma\) be minimizers of the LMI problem~\ref{prob:lmi}.  Then, the system \(R\) is stable, where
    \[R = \smfm{\brac{\begin{array}{c|c}\mathcal{A} - \mathcal{B}_\ssM \bbWh & \mathcal{B}_\ssM D - \mathcal{B}_\ssN \\ \hline -\bbWh & D \end{array}}}\]
    and \(\bbWh = QP\inv - \hat{X}_0 \hat{X}_2\inv\).  Additionally, the cost function in the original problem~\ref{prob:opt} with the minimizer \(\bbWh\) is bounded above by \(\gamma - \hat{X}_0 \hat{X}_2\inv \hat{X}_0^* + X_1\).  
\end{theorem}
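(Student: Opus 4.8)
The plan is to prove the two assertions separately: first that the third LMI constraint is exactly a Lyapunov inequality certifying stability of the state matrix $A_R := \mathcal{A} - \mathcal{B}_M\bbWh$, and second that the remaining two constraints, through a Schur complement combined with a linearization inequality, bound the original cost by $\gamma$ up to a fixed additive constant.

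For stability, I would first substitute the prescribed weight $\bbWh = QP\inv - \hat X_0\hat X_2\inv$ into $A_R$. Using the definition of $Y$ (the closed-loop state matrix associated with the unconstrained minimizer $-\hat X_0\hat X_2\inv$ of Theorem~\ref{thm:sol}), the state matrix collapses to $A_R = Y - \mathcal{B}_M Q P\inv$. Right-multiplying by $P$ and adding the conjugate transpose then gives $A_R P + P A_R^* = YP - \mathcal{B}_M Q + PY^* - Q^*\mathcal{B}_M^*$, which is precisely the left-hand side of the third constraint. Since that constraint reads $A_R P + PA_R^* < -2\alpha P$ with $P>0$ and $\alpha>0$, the standard Lyapunov criterion yields that $A_R$ is Hurwitz with every eigenvalue having real part below $-\alpha$; hence the realization $R$ is stable. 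The essential device here is the change of variables that absorbs the product of the unknown weight and the unknown $P$ into the single variable $Q$, linearizing the otherwise bilinear term.

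For the cost bound, I would expand the objective of Problem~\ref{prob:opt} as the quadratic form $J(\bbWh) = \hat X_1 + \bbWh\hat X_0^* + \hat X_0\bbWh^* + \bbWh\hat X_2\bbWh^*$ and complete the square to obtain $J(\bbWh) = (\bbWh + \hat X_0\hat X_2\inv)\,\hat X_2\,(\bbWh + \hat X_0\hat X_2\inv)^* + \hat X_1 - \hat X_0\hat X_2\inv\hat X_0^*$. Substituting the prescribed weight makes the bracketed factor equal $QP\inv$, so the data-dependent constant $\hat X_1 - \hat X_0\hat X_2\inv\hat X_0^*$ already reproduces the claimed additive terms, and it only remains to show $QP\inv\hat X_2 P\inv Q^* \le \gamma$. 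A Schur complement on the second LMI block, whose $(2,2)$ entry $2P - \hat X_2$ is positive definite, gives $Q(2P-\hat X_2)\inv Q^* < \gamma$, so it suffices to establish $P\inv\hat X_2 P\inv \le (2P - \hat X_2)\inv$ in the positive-semidefinite order. This follows by inverting the linearization inequality $P\hat X_2\inv P \ge 2P - \hat X_2$, which is itself the immediate consequence of the manifestly positive-semidefinite expansion $(P - \hat X_2)\hat X_2\inv(P - \hat X_2)\ge 0$.

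I expect the main obstacle to be precisely this linearization step, since it is what renders the LMI convex while still certifying the original, nonconvex cost. The surrogate $2P - \hat X_2$ stands in for the exact Schur term $P\hat X_2\inv P$, and the argument closes only because $P\hat X_2\inv P \ge 2P - \hat X_2$ holds for \emph{every} admissible $P$, guaranteeing that the relaxed optimum remains a genuine upper bound. Care must also be taken that $\hat X_2 \succ 0$ (needed both for completing the square and for the surrogate) and that the sign conventions in $Y$ and in $\bbWh$ are mutually consistent so that $A_R$ reduces cleanly to $Y - \mathcal{B}_M QP\inv$; that bookkeeping is routine, but it is where an error would most easily slip in.
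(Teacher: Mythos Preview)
Your proposal is correct and follows essentially the same route as the paper: complete the square to isolate $Z\hat X_2 Z^*$ with $Z=\bbWh+\hat X_0\hat X_2^{-1}$, linearize the bilinear Lyapunov constraint via $Q=ZP$, and bound the resulting nonconvex objective $QP^{-1}\hat X_2 P^{-1}Q^*$ by $Q(2P-\hat X_2)^{-1}Q^*$ through the inequality $P\hat X_2^{-1}P\ge 2P-\hat X_2$ before taking a Schur complement. The only cosmetic difference is that the paper invokes this last bound as Young's relation with a citation, whereas you derive it directly from $(P-\hat X_2)\hat X_2^{-1}(P-\hat X_2)\ge 0$; your closing caveat about the sign bookkeeping in $Y$ versus $\bbWh$ is well placed, since making $A_R$ reduce to $Y-\mathcal{B}_M QP^{-1}$ indeed requires $Y$ to be the closed-loop matrix at the unconstrained optimum.
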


\begin{proof}
Consider the optimization problem~\eqref{eq:min_stable}. We can optimize this objective function such that the resulting system's \(A\) matrix satisfies an inequality which enforces the real part of the poles to be less than some \(-\alpha\), i.e.
\begin{align*}
    &\min_{\widehat{\bbW},\,P} \bbW X \bbW^*, \\ 
    &P > 0, \; (\mathcal{A}-\mathcal{B}_M \bbW)P+P(\mathcal{A}-\mathcal{B}_M \bbW)^* < -2\alpha P.
\end{align*}

\bigskip
First, we will rewrite the problem have a strictly quadratic objective in terms of our optimization variables.  We will substitute in our expression for \(\bbW\) and partition \(X\), giving the objective
\begin{align*}
    \bbW X \bbW^* &=  \begin{bsmallmatrix}1 & \widehat{\bbW}\end{bsmallmatrix} \begin{bsmallmatrix}\hat{X}_1 & \hat{X}_0 \\ \hat{X}_0^* & \hat{X}_2\end{bsmallmatrix} \begin{bsmallmatrix}1 \\ \widehat{\bbW}^*\end{bsmallmatrix} \\
    &= \hat{X}_1 + \hat{X}_0 \bbWh^* + \hat{X}_0^* \bbWh + \bbWh \hat{X}_2 \bbWh^*.
\end{align*}
Completing the square gives
\[(\bbWh+\hat{X}_0 \hat{X}_2\inv) \hat{X}_2 (\bbWh+\hat{X}_0 \hat{X}_2\inv)^* + \hat{X}_1 - \hat{X}_0 \hat{X}_2\inv \hat{X}_0,\]
Which then becomes \(Z \hat{X}_2 Z^*\) after letting \(Z = \bbWh + \hat{X}_0 \hat{X}_2\inv\) and dropping the constant terms.  Making this substitution into the inequality yields
\[(Y - \mathcal{B} _M Z) P + P(Y - \mathcal{B}_M Z)^* < -2\alpha P,\]
where \(Y = \mathcal{A} - \mathcal{B}_M \hat{X}_0 \hat{X}_2\inv\), thus our minimization becomes
\begin{align*}
    &\min_{Z,\, P} Z\hat{X}_2 Z^*, \\
    &P > 0, \; (Y - \mathcal{B}_M Z) P + P(Y - \mathcal{B}_M Z)^* < -2\alpha P.
\end{align*}

\bigskip
The inequality is bilinear in \(Z\) and \(P\), so we make the substitution \(Q = ZP\) and optimize over \(P\) and \(Q\) instead, giving 
\begin{align*}
    &\min_{Q,\, P} QP\inv \hat{X}_2 P\inv Q^*, \\
    &P > 0, \; YP- \mathcal{B}_MQ + PY^* - Q^* \mathcal{B}_M^* < -2\alpha P.
\end{align*}

\bigskip
Now we will focus on the objective function \(QP\inv \hat{X}_2 P\inv Q^*\), which can't be incorporated into an LMI directly. Thus, we must bound it above by a conservative quantity that can.  Inspired by~\cite{Gosea24Stable}, we use a special case of Young's relation.  This relation states for positive definite \(S\) and \(P\),~\cite{Caverly24}
\[2P \leq PS\inv P + S.\]
In our case, we let \(S = \hat{X}_2\), giving \(P \hat{X}_2\inv P \geq 2P-\hat{X}_2\).  Taking the inverse of both sides and multiplying from the left and right by \(Q\) and \(Q^*\) yields
\begin{equation}
    Q(2P-\hat{X}_2)\inv Q^* \geq QP\inv \hat{X}_2 P\inv Q^*. \label{eq:opt_comp}
\end{equation}
This shows that our objective function is bounded above by this new quantity.  If we now introduce a scalar slack variable \(\gamma\) that upper bounds this quantity \(\gamma > Q (2P-\hat{X}_2)\inv Q^*\), then perform a Schur complement, we get the LMI
\[\begin{bmatrix}\gamma & Q \\ Q^* & 2P - \hat{X}_2\end{bmatrix} > 0.\]
Incorporating everything together gives us our resulting set of LMIs.  From equation~\ref{eq:opt_comp}, we substitute and expand to get the inequality
\[\gamma > \bbWh\hat{X}_2 \bbWh^* + \hat{X}_0\bbWh^* + \bbWh \hat{X}_0^*+\hat{X}_0 \hat{X}_2\inv \hat{X}_0^*.\]
Thus,
\[\gamma - X_0X_2\inv X_0^* + X_1 > \bbWh \hat{X}_2 \bbWh^* + \hat{X}_0\bbWh^* + \bbWh \hat{X}_0^* + \hat{X}_1.\]
The right-hand side of the inequality exactly equals the expanded cost function of problem~\ref{prob:opt}, thus the left-hand side serves as an upper bound for the original optimization problem.   
\end{proof}

\subsection{Frequency selection}
The last important step to consider is the frequency selection strategy.  In other AAA-like algorithms, an interpolation point is added at the frequency where the error is highest.  However, we don't have full access to the frequency response of the target system, only to the frequency response data we obtain from experiments.  Thus, the simplest option is to consider a frequency range of interest for the system \([\omega_{\min}, \, \omega_{\max}]\) and run a number of system identification experiments along a grid of equally-spaced frequencies. Then, interpolate the recorded frequency response and optimize the weights over all of the input-output data.  The problem with this however, is that not all interpolation points will appreciably improve the approximation, forcing the grid size to be very fine.  

We will instead consider an adaptive gridding strategy.  The main idea behind the adaptive strategy is maintaing a list of ``test frequencies'' and interpolating at the test frequency with the largest error at each step.  We start by interpolating both at the lowest and highest frequencies \(\omega_{\min}\) and \(\omega_{\max}\).  We maintain a store of data from system ID experiments performed at the logarithmic midpoints (i.e. the geometric mean) between each of the interpolated frequencies, and add the one with highest approximation error at each step.  We then perform two new system ID experiments at the geometric mean between the chosen frequency and its neighboring frequencies.  Pseudocode describing more details of this adaptive strategy is listed in algorithm~\ref{alg:adaptive}. 

\begin{algorithm}
    \caption{Adaptive frequency selection}
    \label{alg:adaptive}
    \begin{algorithmic}[1]
        \small
        \Require{Unknown system \(G\), feedthrough \(D\), DC gain \(K\), \(\omega_{\min}\), \(\omega_{\max}\)}
        \State The set of interpolation frequencies \(\curly{\omega_i} \gets \curly{\omega_{\min}, \, \omega_{\max}}\).
        \State The test frequency list \(\curly{\hat{\omega}_i}\gets\curly{\sqrt{\omega_{\min}\omega_{\max}}}\).
        \State Run system ID experiments at \(\omega_{\min}\), \(\omega_{\max}\), and \(\hat{\omega}_1\) to initialize input-output data store \(\curly{(u_i, y_i)}\) and frequency response data \(\curly{G(j\omega_i)}\) as per section~\ref{sec:sys_id}.
        \State Find \(\bbW\) using theorem~\ref{thm:sol} or theorem~\ref{thm:stable} if the stability of \(R\) is important.
        \State Use \(\bbW\), frequency response data, \(D\), and \(K\) to construct \(R\) from equation~\ref{eq:R}.
        \Repeat
            \State Choose the test frequency \(\hat{\omega}_k\) that has highest approximation error \(|R(j\hat{\omega}_k) - G(j\hat{\omega}_k)|\).
            \State Let \(\omega_l\) and \(\omega_h\) be the next lowest and highest frequency in \(\curly{\omega_i}\).
            \State Remove \(\hat{\omega}_k\) from test frequency list, add to interpolation frequency list.
            \State Add \(\curly{\sqrt{\hat{\omega}_k\omega_l}, \, \sqrt{\hat{\omega}_k\omega_h}}\) to the test frequency list.
            \State Perform two system ID experiments at these frequencies and store input-output an frequency response data.
            \State Find \(\bbW\) using the input-output data from \textit{all} experiments.
            \State Construct \(R\).
        \Until{the model \(R\) is satisfactory}
    \end{algorithmic}
\end{algorithm}

Overall, we have described two frequency selection strategies and introduced two weight optimization problems, which we will evaluate in the following computational results section.  

\section{Computational results}	\label{computation.sec}

In this section we will demonstrate the performance of the our algorithm with numerical examples.  First, we will compare the  performance of both frequency selection approaches by comparing Bode plots and their respective \(\sf H^2\) error norm for various system sizes.  Then, we discuss the effect of adding the stability constraint to the optimization problem.  In the following numerical examples, we use the \((1,\,1)\) channel of a 270-state ``ISS'' model, which describes the flexural dynamics of one of the modules of the International Space Station~\cite{iss_model}.  This model is treated as an unknown system which we can interrogate with a sinusoidal input in the frequency range \([0.5,\,90]\) Hz.  We then generate the resulting estimated model of the system using the approaches in the previous section, opting for the stabilizing optimization problem unless otherwise specified.  

We first investigate the qualitative differences between the gridded and adaptive frequency selection approaches.  Figures~\ref{fig:qual_simple} and~\ref{fig:qual_adaptive} show Bode plots of the ISS model and a 43-state system (21 interpolation points) generated using each approach respectively.  In figure~\ref{fig:qual_simple}, we see the dynamics at low frequencies are captured well, but the dynamics at higher frequencies are not modeled particularly well because the scale of the dynamics is much finer than the interpolation frequency grid.  In comparison, figure~\ref{fig:qual_adaptive} captures the low frequency dynamics just as well with fewer points, and resolves the middle and high frequencies much better; the finer grid helps resolve the details that are missed in the gridded approach, which is especially evident when viewing the phase plot.  Interestingly, the dynamics near the peaks are generally resolved well even though the nearest interpolated frequency is not particularly close.  This is due to the large effect the peaks have on the transient responses in the input-output data and highlights one of the benefits of this method.  

\begin{figure}[ht]
    \includegraphics[width=\linewidth]{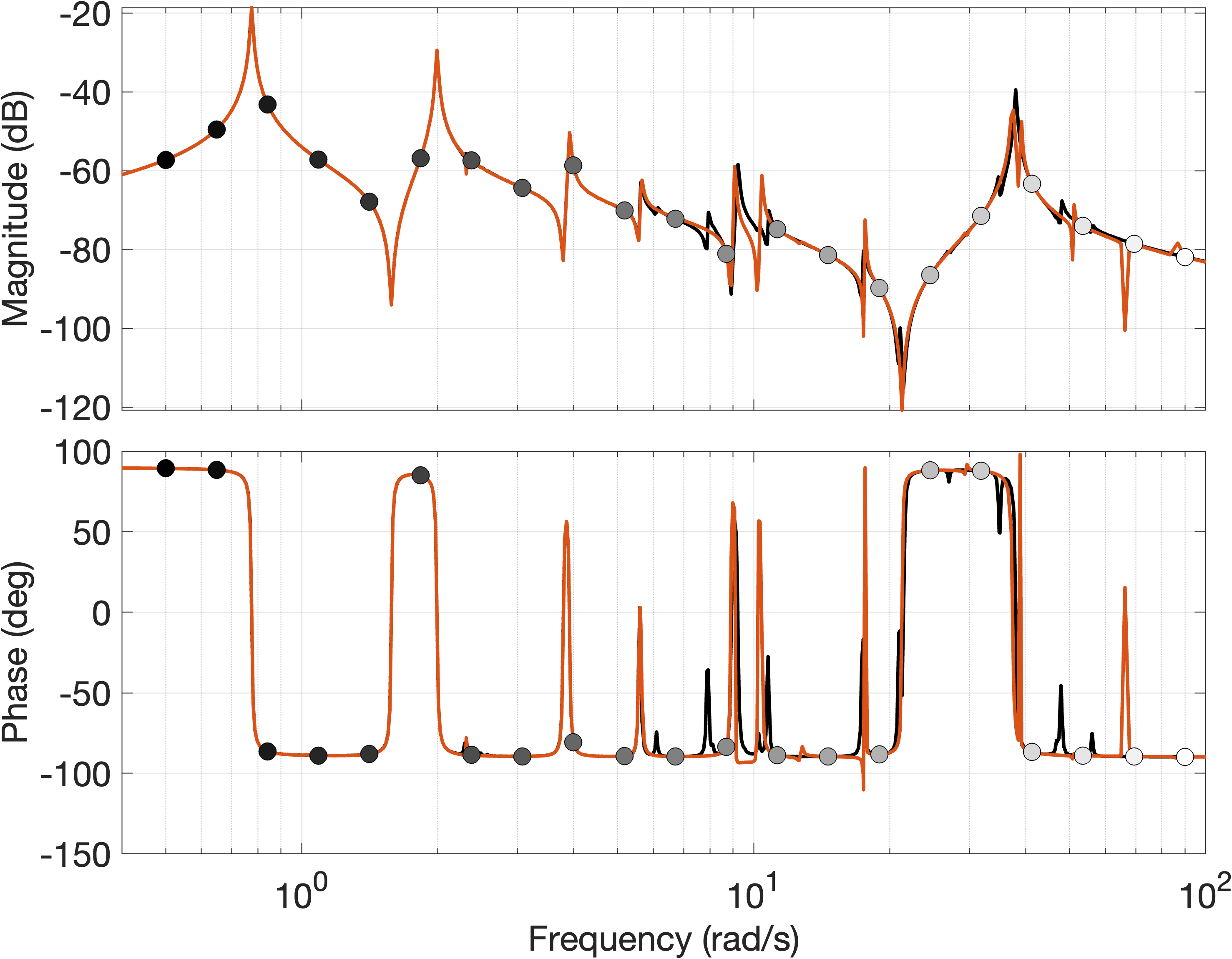}
    \caption{\capsize A Bode plot of the input ISS system (in black) and resulting 43 order system constructed using the gridded frequency selection approach (in red).  The circles indicate the frequency at which a system ID experiment was ran as well as the measured response data.}
    \label{fig:qual_simple}
\end{figure}

\begin{figure}[ht]
    \includegraphics[width=\linewidth]{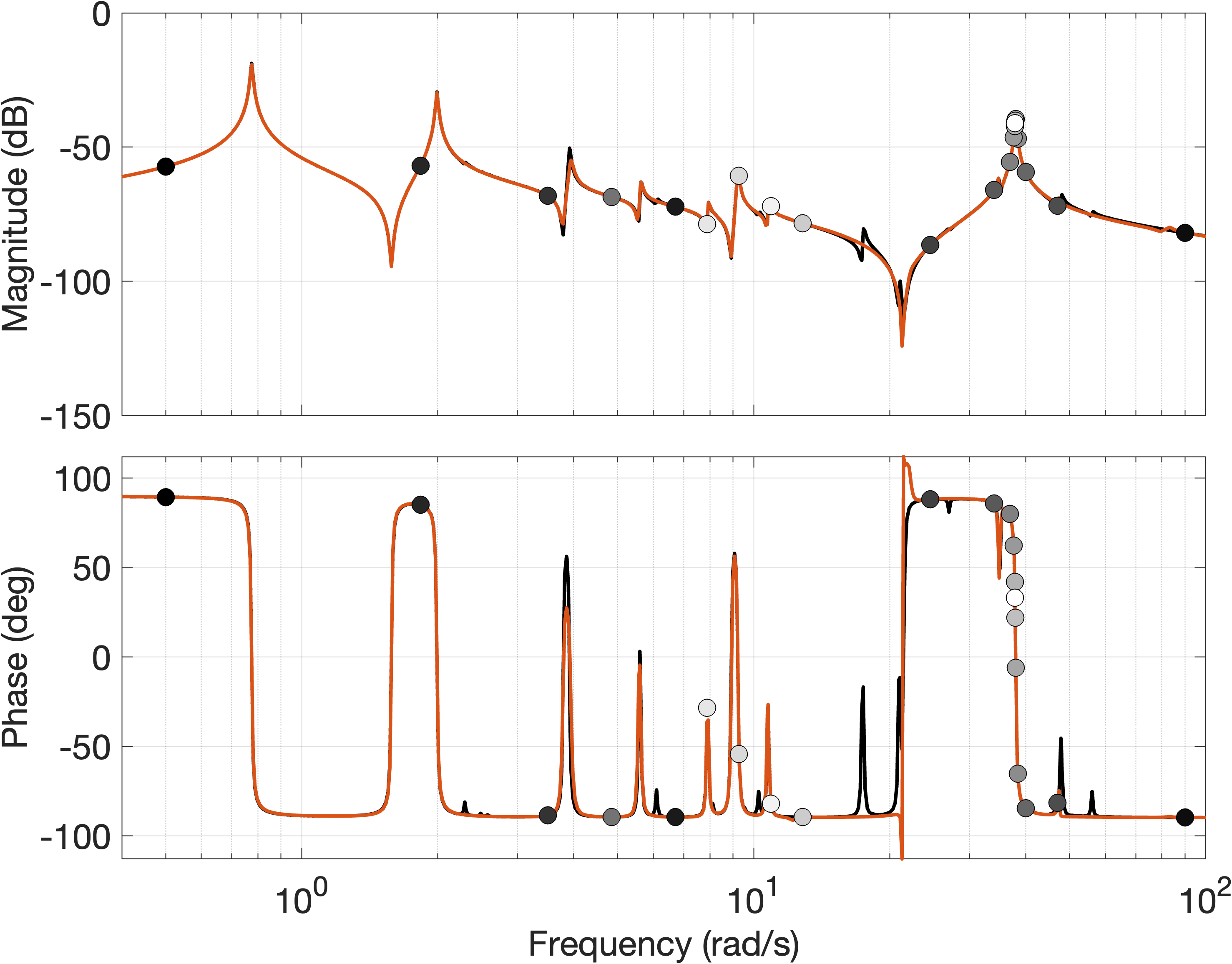}
    \caption{\capsize A Bode plot of the input ISS system (in black) and resulting 43 order system constructed using the adaptive frequency selection approach (in red).  The circles indicate the frequency at which a system ID experiment was ran as well as the measured response data.  The circle color indicates the order the experiments were ran, with black being oldest and white being newest.}
    \label{fig:qual_adaptive}
\end{figure}

Figure~\ref{fig:norm_methods} highlights the \(\sf H^2\) error norm of both approaches for a varying number of poles.  The performance of the gridded approach doesn't show a clear pattern as more poles are added; it would be hard to know a priori how many interpolation points would be required to achieve a satisfactory system.  In contrast, the adaptive approach demonstrates improvement as the system size grows without much variation.  On the whole, its performance is superior to that of the gridded approach as it achieves a lower error norm.

\begin{figure}[ht]
    \includegraphics[width=\linewidth]{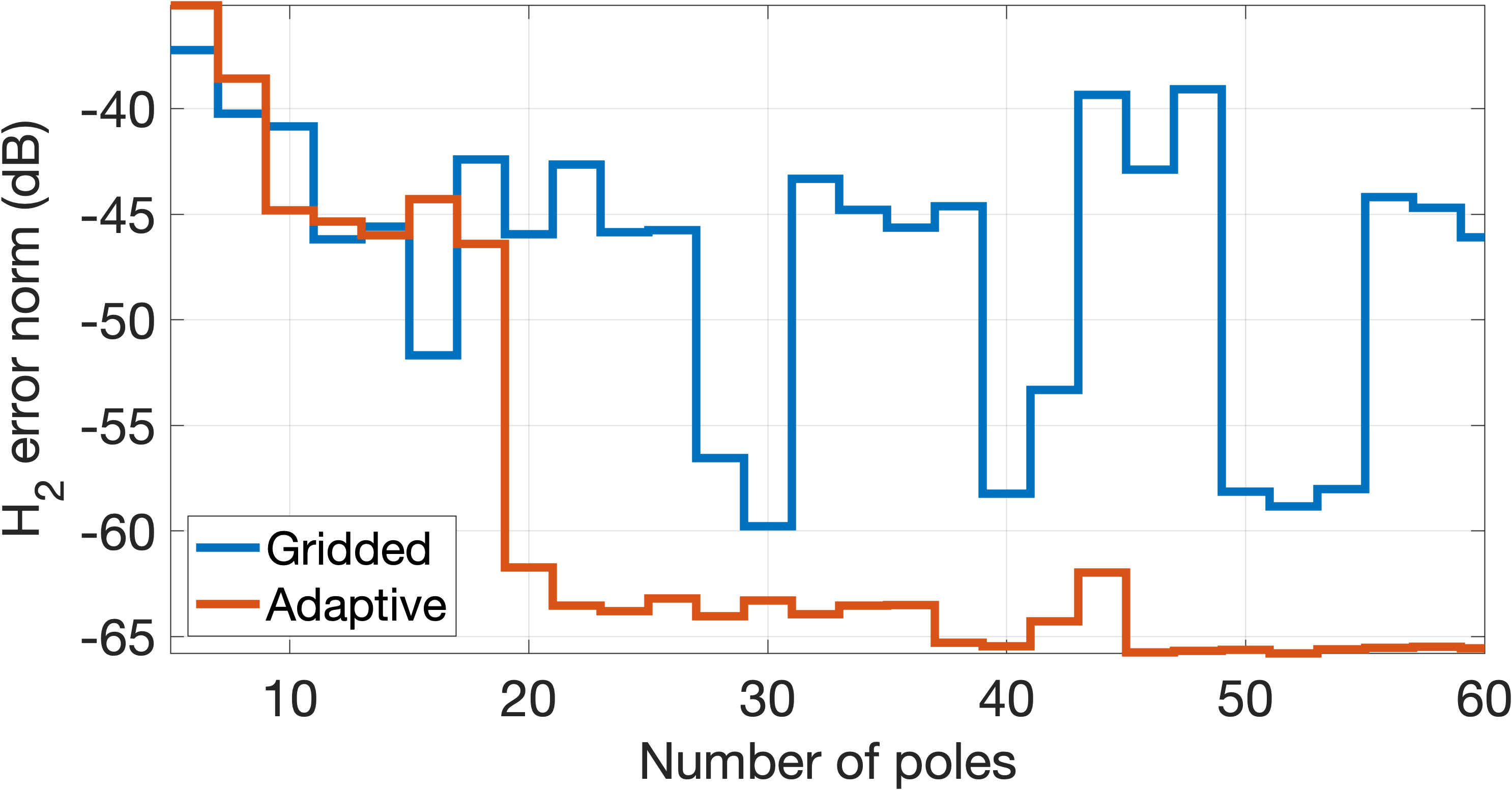}
    \caption{\capsize A plot showing the \(\sf H^2\) norm of the error system, i.e. \(R-G\), against the number of poles of each system \(R\) generated from the two frequency selection strategies with the input ISS system \(G\).  The blue line and red line indicate the error norm of systems generated with the gridded frequency selection strategy and the adaptive frequency selection strategy respectively.}
    \label{fig:norm_methods}
\end{figure}

Finally, we show the effect of adding the stability constraint in figures~\ref{fig:norm_opt} and~\ref{fig:bode_stability}.  We see with the exception of the 30-pole system, the \(L_\infty\) norm of the error system is affected negligibly by the addition of the stability constraint and resulting relaxed optimization problem.  This exception is caused by the appearance of a lightly-damped pole that disappears in the next iteration and has no discernable effect on the \(\sf H^2\) error norm.  Looking at the Bode plot when the system size is 11 in figure~\ref{fig:bode_stability}, we see that the magnitude response indeed matches well, and the difference between the two is only evident when viewing the phase response.  We observe the prescense of a pole and zero which are on opposite sides of the imaginary axis. 

\begin{figure}[ht]
    \includegraphics[width=\linewidth]{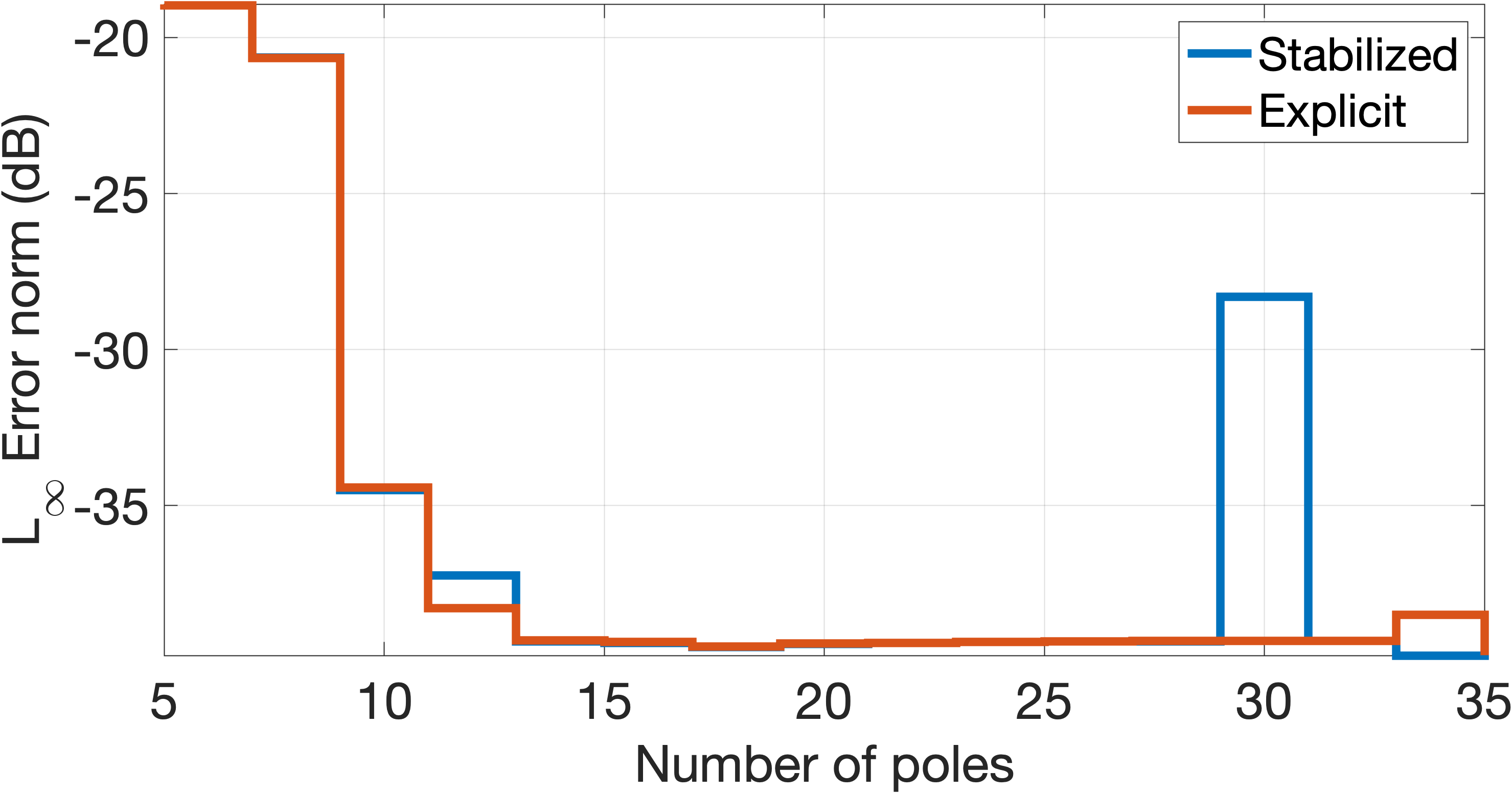}
    \caption{\capsize A plot showing the \(L_\infty\) norm of the error system, i.e. \(R-G\), against the number of poles of each system \(R\) generated using two different optimization approaches with the input ISS system \(G\).  The blue line and red line indicate the error norm of systems generated with the stability-enforced optimization problem and the explicit/unconstrained optimization problem respectively.}
    \label{fig:norm_opt}
\end{figure}

\begin{figure}[ht]
    \includegraphics[width=\linewidth]{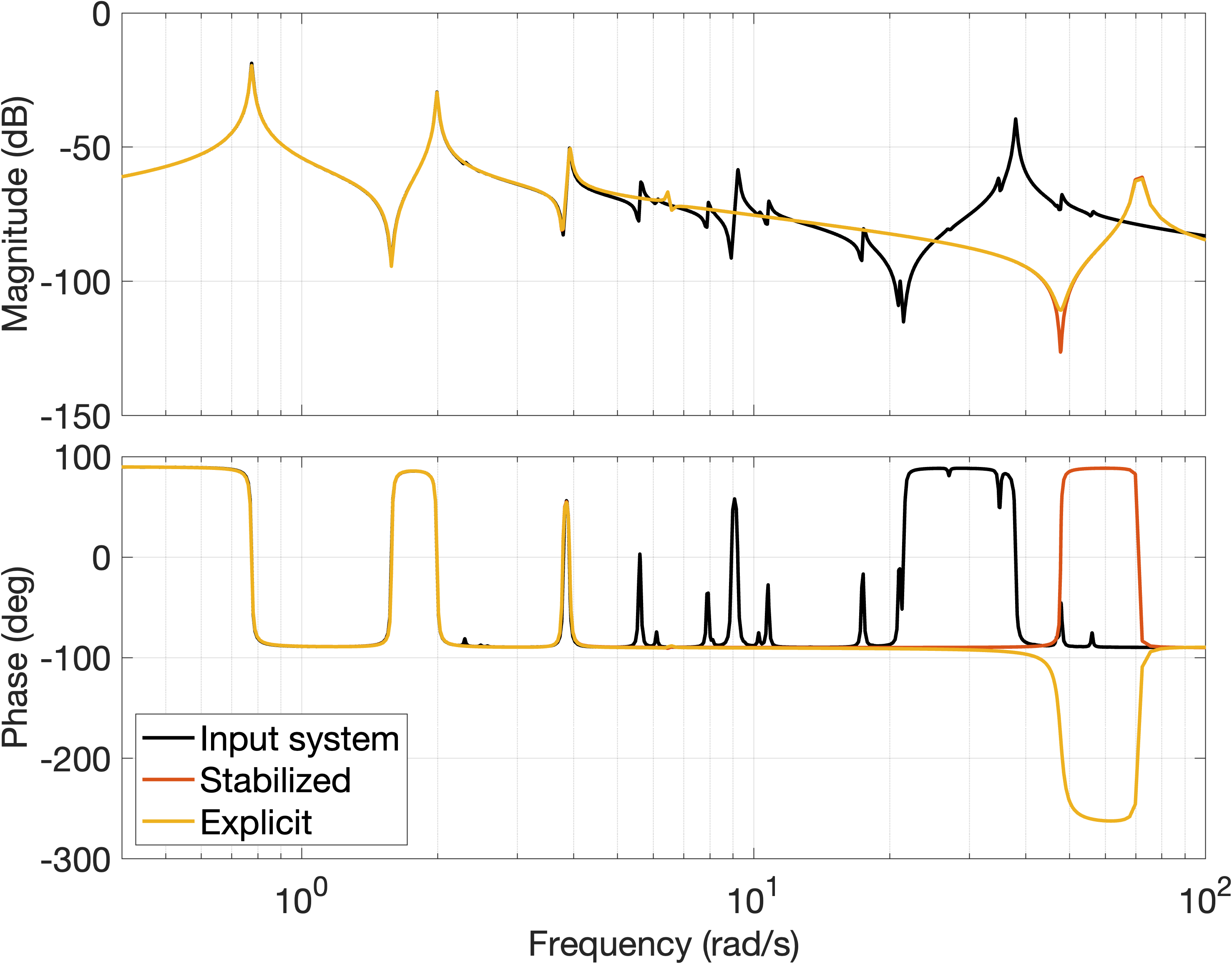}
    \caption{\capsize A Bode plot of the ISS system in black and two 11 order systems generated using the adaptive approach with the stability constrained optimization in red and the unconstrained optimization in yellow.}
    \label{fig:bode_stability}
\end{figure}

\section{Conclusion}
In this paper we introduced an iterative algorithm suitable for system identification on lightly-damped systems which requires a minimal number of system identification experiments.  The resulting system, which takes the form of a barycentric interpolant,  obtains its weight free parameters by solving an optimization problem that minimizes a proxy for the mean squared error between the unknown system's and the reconstruction system's output subject to sinusoidal inputs of varying frequencies; the minimizer of this optimization problem has an explicit expression which is the solution of a linear equation.  We also derived a relaxed optimization problem with an added stability constraint which can be solved with an LMI solver.  This relaxed optimization problem negligibly affects the performance of the algorithm in comparison to the unconstrained problem in general.  We explored two approaches to frequency selection: a gridded approach and an adaptive approach, and showed through our computational results that the adaptive approach yields more consistent and better-approximating systems.  On the whole, the algorithm produces well-performing, stable systems with a low number of system identification experiments performed.  

In future works we will investigate different adaptive frequency selection strategies that utilize the transient response data in place of the adaptive gridding strategy.   We will also rigorously prove various properties and performance metric bounds for the performance of the algorithm and its resulting systems.  Finally, we plan to test the algorithm on a lightly-damped system such as the Rikje tube and evaluate its performance in comparison to other system identification schemes.

\begin{appendix} 

\subsection{Detecting the Onset of Steady State} \label{transient_steady.app}
To estimate the frequency response of the system at a frequency \(\omega\), we will interrogate it with a sinusoidal wave with amplitude \(A\) and collect data until it reaches steady state.  Suppose we are sampling at a frequency of \(f_s\), and we are collecting the output of the system \(y\) in real time in chunks of length \(L\).  We define \(L\) such that it contains a sufficient number of complete cycles of the sinusoidal wave.  

Consider a chunk of data \(y_i\) from \(i=n\) to \(n+L\).  We will use least squares to check goodness of fit with a sinusoidal wave in order to determine whether the system has reached steady state.  Define the signals
\[c_i = \cos(\omega i/f_s), \; s_i = \sin(\omega i/f_s),\]
then estimate the coefficients \(x_1\) and \(x_2\) in the linear equation 
\[
	\begin{bmatrix}y_n & \cdots & y_{n+L}\end{bmatrix} = \begin{bmatrix}x_1 & x_2\end{bmatrix} 
		\begin{bmatrix}c_n & \cdots & c_{n+L} \\ s_n & \cdots & s_{n+L}\end{bmatrix}
\]
using least squares.  We now check the goodness of fit by measuring the residuals, i.e. \(\curly{r_i} = x_1 c_i + x_2 s_i - y_i\).  If the maximum relative error in the block 
\[
	\hat{\gamma} = \frac{\max_{i\in [n, \, n+L]} \left|r_i\right|}
				{\max_{i\in [n, \, n+L]} \left| y_i\right|}
\]
is less than some threshold \(\gamma\), then the goodness of fit is satisfactory and we can conclude that steady state has been reached. Finally we record the frequency response data as \(G(j\omega) \approx A(x_1 - jx_2)\) and save the transient data for later processing.  Otherwise, we wait until the next chunk of data is received and repeat the process.

\end{appendix}

\bibliographystyle{IEEEtran}
\bibliography{IEEEabrv, paper_bib}

\end{document}